\newtheorem{theorem}{Theorem}
\newtheorem{assumption}{Assumption}
\DeclareMathOperator{\sign}{sign}
\title{\textbf{A full controller for a fixed-wing UAV}}
\author{Gerardo Flores, Alejandro Flores and Andr\'es Montes de Oca \thanks{G. Flores, A. Flores and A. Montes de Oca are with the Perception and Robotics Laboratory, Centro de Investigaciones en \'{O}ptica, Le\'{o}n, Guanajuato, Mexico, 37150. (email: gflores@cio.mx, alejandrofl@cio.mx and andresmr@cio.mx). Corresponding author: Gerardo Flores.} 
\thanks{This work was supported partially by the FORDECYT-CONACYT under grant 292399 and by the Laboratorio Nacional de \'Optica de la Visi\'on of the CONACYT agreement 293411.}
}
\begin{document}
\maketitle
\thispagestyle{empty}
\pagestyle{empty}
\begin{abstract}
This paper presents a nonlinear control law for the stabilization of a fixed-wing UAV. Such controller solves the path-following problem and the longitudinal control problem in a single control. Furthermore, the control design is performed considering aerodynamics and state information available in the commercial autopilots with the aim of an ease implementation. It is achieved that the closed-loop system is G.A.S. and robust to external disturbances. The difference among the available controllers in the literature is: 1) it depends on available states, hence it is not required extra sensors or observers; and 2) it is possible to achieve any desired airplane state with an ease of implementation, since its design is performed keeping in mind the capability of implementation in any commercial autopilot.
\end{abstract}
\begin{keywords}
Fixed-wing; path-following; UAV; longitudinal aircraft; dynamics Lyapunov-based control.
\end{keywords}
\IEEEpeerreviewmaketitle

\section{Introduction} \label{sec:intro}
Fixed-wing Unmanned Aerial Vehicles (UAVs) have become relevant above the quadrotors in remote sensing applications, precision agriculture, and surveillance. This is due to its flight endurance, long flying times, high speeds and energy efficiency. Due to the inherent non-linearity condition of fixed-wing UAVs, effective control need to be applied to this type of vehicles so this is a topic that remains being a challenge. In this paper we propose a full controller that achieves to stabilize the airplane in a condition required to perform inspection and surveillance missions. Full control means that it can stabilize the UAV at any desired position and velocity taking into account the optimal angle of attack and aerodynamics. This controller is developed taking in mind its applicability in autopilots such as the popular Pixhawk.
%
%
\begin{figure}[ht!]
    \centering
   \begin{subfigure}{0.5\textwidth}
    \includegraphics[width=\textwidth]{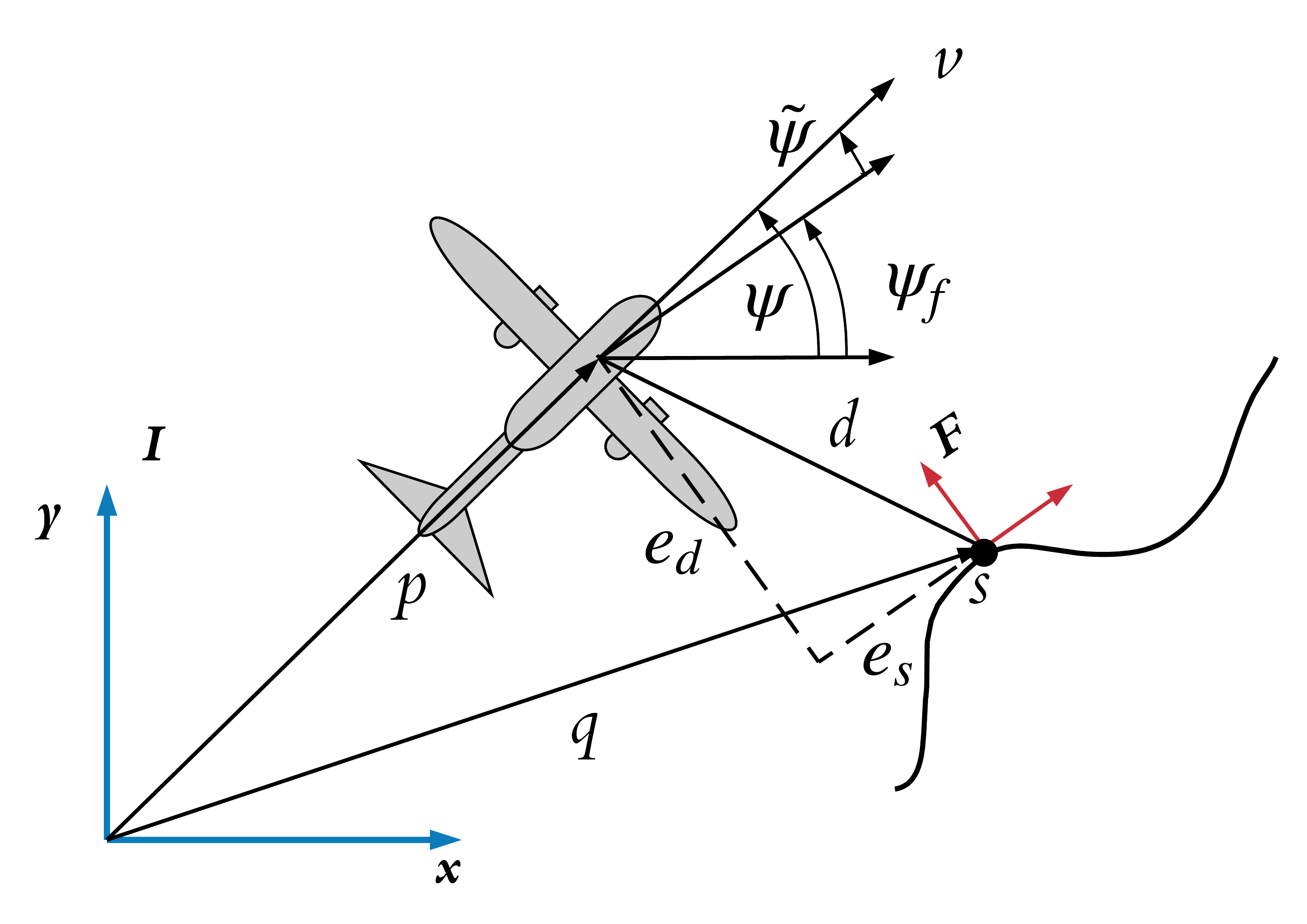}
    \caption{Lateral view.}
    \label{fig:path_following}
	\end{subfigure}
\begin{subfigure}{0.55\textwidth}
    \includegraphics[width=\textwidth]{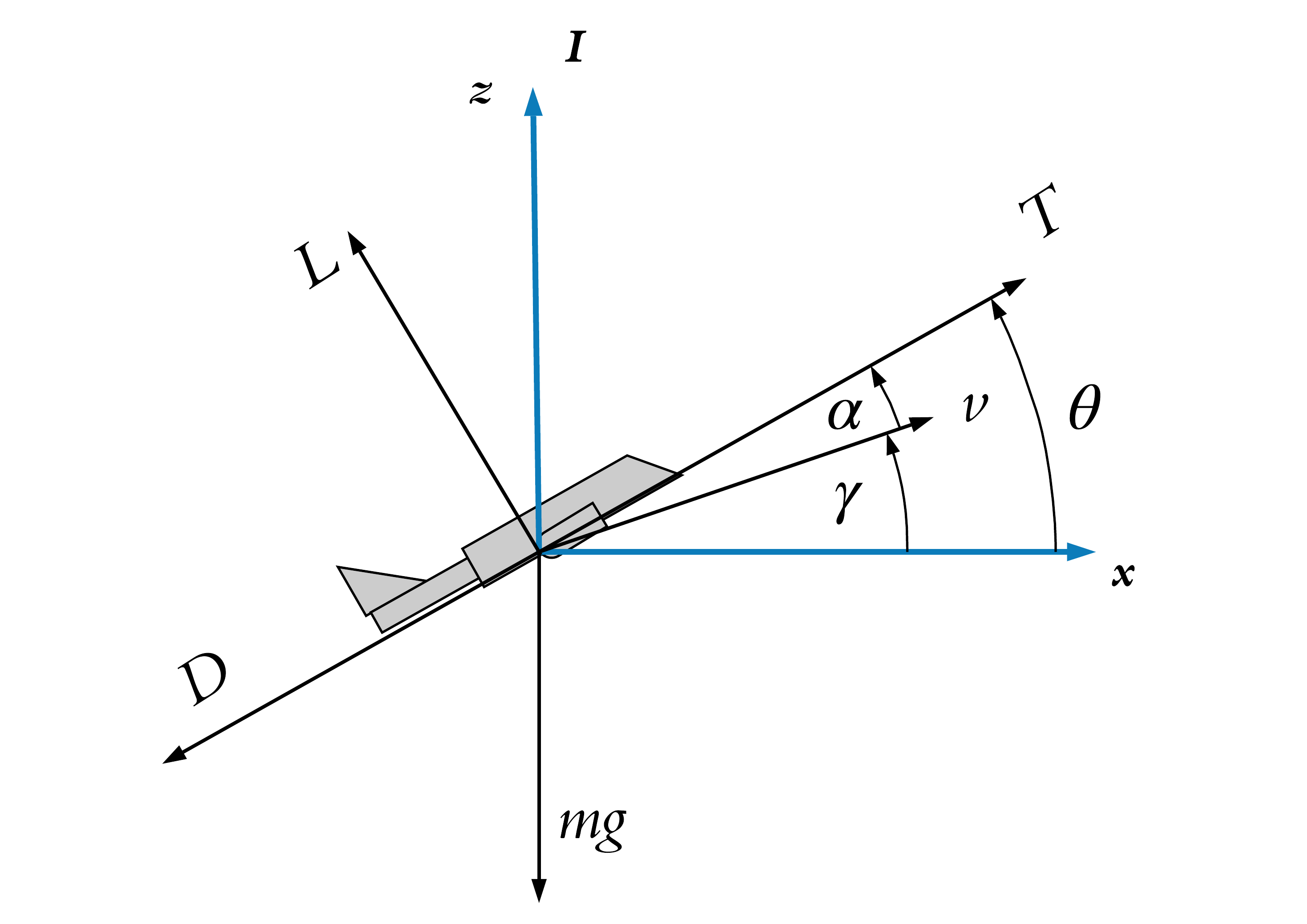}
    \caption{Longitudinal view.}
    \label{fig:longitudinal_model}
    	\end{subfigure}
    	\caption{Diagram of the fixed-wing control problem.}
    	\label{fig:models}
\end{figure}
%
\subsection{Literature review}
In the vast majority of the works relating fixed-wing aircraft control, the problem is tackled based in three fundamental modeling-based cases: a) the longitudinal model where the airspeed, angle of attack and pitch dynamics are controlled; b) path-following in 2D and 3D, where desired $(x,y)$ and $(x,y,z)$ positions must be achieved  respectively; and c) considering the full 6-DOF airplane mathematical model. Each of these approaches have its own pros and cons. Studying longitudinal model and path-following problem separately, inherits the problem of not considering the full control of the plane. Whereas in the path-following control the airspeed and aerodynamics are not considered, in the longitudinal approach the steering of the path is not taken into account. On the other hand, investigating the 6-DOF model-based control usually results in complex controls that includes a great quantity of parameters and terms that makes the implementation in real UAVs a complex task. Some of the most recent works of these three cases are shown at Table \ref{tab:biblio}.
%
\begin{table*}[ht]
\centering
\caption{State of the art of fixed-wing UAV controllers.}
\label{tab:biblio}       
\begin{tabular}{l|l|l|l|l|}
\cline{2-5}
& 6-DOF
& Path-following 2D               
& Path-following 3D     
& Longitudinal airplane dynamics control\\ \hline 
\multicolumn{1}{|l|}{No disturbance}  & \cite{hamada_receding_2018} &  \cite{zhao_integrating_2018}, \cite{jain_self-triggered_2017}, \cite{FW_pf4} & \cite{kaminer_path_2010}, \cite{vanegas_smooth_2018}, \cite{oliveira_three_2017}, \cite{cho_three-dimensional_2015}, \cite{liang_vector_2015}     &                                                                             \cite{michailidis_robust_2017}, \cite{kong_longitudinal_2014}, \cite{brigido-gonzalez_adaptive_2014}, \cite{as_controller2}\\ \hline
\multicolumn{1}{|l|}{Disturbance}     &                                        &       \cite{fortuna_cascaded_2015}                                   , \cite{liang_combined_2015} &  \cite{palframan_lpv_2015}, \cite{beard_fixed_2014}, \cite{liu_path-following_2012}  & \\ \hline
\end{tabular}
\end{table*}

Apart from the aforementioned approaches to control fixed-wing UAVs, there is the guidance law that resolves the problem of path following by a simple lateral acceleration command; this approach can be seen for instance in \cite{zhang_path-following_2014}, and in the famous paper \cite{park_new_2004} which presents the guidance law implemented in the Pixhawk commercial autopilot.

Regarding control techniques used for solving the control problem there are for instance: backstepping \cite{att_backstepping}, \cite{attitude_3}, \cite{FW_pt}; gain scheduled \cite{attitude_2}; sliding mode \cite{attitude_4}; model referenced adaptive control \cite{attitude_5}; model predictive control \cite{FW_l1_pt}, \cite{FW_guidance}; active disturbance rejection control \cite{dist_rejection_pf}; among others.
%
\subsection{Contribution}
The motivation to develop this work is twofold: a) present an algorithm that fully controls the fixed-wing UAV considering aerodynamics and the path steering problem; and b) design a controller as simple as possible that effectively stabilizes the system in the complete flight dynamics, i.e. stabilizing aircraft $(x,y)$ position, angle of attack, pitch dynamics, flight-path angle, yaw dynamics and airspeed (and as a consequence $z$ dynamics) to some given desired states. The controlled is designed having in mind its implementation in the Pixhawk autopilot using the sensor information available on it.

\subsection{Paper structure}
The rest of the paper is structured as follows. In Section \ref{sec:problem} the fixed-wing mathematical model is described, then based on desired states an error model is obtained. Section \ref{sec:result} presents the full control that stabilized the complete system, a formal proof is given. Section \ref{sec:sims} contains the simulation results that validates the effectiveness of the control. Finally, in Section \ref{sec:conclusions} some comments and future work are presented.

\section{Problem Formulation} \label{sec:problem}
\subsection{System Model}
Consider the following mathematical model representing the longitudinal and lateral dynamics according to Fig. \ref{fig:models}.
\begin{align}
&
    \begin{array}{*{20}l}
    \label{eq:model_longitudinal}
        m\dot{v} &=& T\cos{\alpha} - D - mg\sin{\gamma} \\
		mv\dot{\gamma} &=& T\sin{\alpha} + L - mg\cos{\gamma} \\
		m\dot{\theta} &=& q \\
		I_{y}\dot{q} &=& \tau \\
    \end{array}
    \\
&
    \begin{array}{*{20}l}
    \label{eq:model_lateral}
        \dot{x} &=& v \cos\psi \\
        \dot{y} &=& v \sin\psi  \\
        \dot{\psi} &=& \omega\\
    \end{array}
\end{align}
where $v$ is the airspeed of the drone, $\alpha$, $\gamma$ and $\theta$ are the angle of attack (AoA), the bank and the pitch angle, respectively, $D$ and $L$ are the drag and lift forces generated by the aerodynamics of the wing, $\dot{x}$ and $\dot{y}$ are the velocities in the inertial frame ($I$), and $\psi$ is the yaw angle. As control inputs, the dynamics can be controlled by $T$, $\tau$ and $\omega$. Without loss of generality, let normalize system \eqref{eq:model_longitudinal}, i.e. $m=I_y=1$ and then it is rewritten as
\begin{align}
    \begin{array}{*{20}l}
    \label{eq:longitudinal_simple}
        \dot{v} &=& T\cos{\alpha} - D - \sin{\gamma} \\
		v\dot{\gamma} &=& T\sin{\alpha} + L - \cos{\gamma} \\
		\dot{\theta} &=& q \\
		\dot{q} &=& \tau.
    \end{array}
\end{align}
We need to introduce the error of the angles. So, first, we define the velocity, bank angle and pitch errors as
\begin{align}
\label{longitudinal_errors}
    \begin{array}{*{20}l}
    \Tilde{v}=v-v_{d} \\
    \Tilde{\gamma}=\gamma-\gamma_{d}\\
    \Tilde{\theta}=\theta-\theta_{d}.
    \end{array}
\end{align}
Using \eqref{longitudinal_errors} in \eqref{eq:longitudinal_simple} and considering that $\Tilde{q} = q - q_{d}$, then we have
\begin{align*}
    \begin{array}{*{20}l}
        \dot{\Tilde{v}} &=& T\cos{\alpha} - D - \sin(\Tilde{\gamma} + \gamma_{d}) \\
		\dot{\Tilde{\gamma}} &=& \frac{T\sin{\alpha} + L - \cos(\Tilde{\gamma} + \gamma_{d})}{\Tilde{v} + v_{d}} \\
		\dot{\Tilde{\theta}} &=& q - q_{d}\\
		\dot{q} &=& \tau - \dot{q}_{d}.
    \end{array}
\end{align*}
If we define the following coordinate change
\begin{align*}
    \Tilde{\theta}= \begin{array}{*{20}l}
    \begin{pmatrix}
    \Tilde{\theta_{1}}\\
    \Tilde{\theta_{2}}
    \end{pmatrix} =
        \begin{pmatrix}
    \theta-\theta_{d}\\
    q-q_{d}
    \end{pmatrix}
    \end{array}
\end{align*}
and consider that $\Tilde{\alpha}=\alpha-\alpha_{d}$. The system changes to
\begin{align}
    \begin{array}{*{20}l}
    \label{eq:changed_longitudinal_model_2}
        \dot{\Tilde{v}} &=& T\cos{(\Tilde{\alpha}+\alpha_{d})} - D - \sin(\Tilde{\gamma} + \gamma_{d}) \\
		\dot{\Tilde{\gamma}} &=& \frac{T\sin{(\Tilde{\alpha}+\alpha_{d})} + L - \cos(\Tilde{\gamma} + \gamma_{d})}{\Tilde{v} + v_{d}} \\
		\dot{\Tilde{\theta}}_{1} &=& \Tilde{\theta}_{2}\\
		\dot{\Tilde{\theta}}_{2} &=& \tau - \dot{q}_{d}
    \end{array}
\end{align}
where $\dot{\Tilde{\theta}}_{2}= \Tilde{\tau}$.

For the lateral model, we define orientation and position errors according to Fig. \ref{fig:path_following}. This path following approach has been investigated in our previous work \cite{FW_pf4} where a virtual particle that moves over the desired path according to the velocity and heading of the UAV is used to achieve the following task. So, the errors are defined as follows \cite{FW_pf4}
\begin{align*}
    \begin{array}{*{20}l}
    \label{eq:error_changed_model}
    \dot{e}_{s} &=& v\cos(\Tilde{\psi}) - \dot{s} + Cc(s)e_{d}\dot{s}\\
    \dot{e}_{d} &=& v\sin(\Tilde{\psi}) - Cc(s)e_{s}\dot{s}\\
    \dot{\Tilde{\psi}} &=& \omega - Cc(s)\dot{s}
    \end{array}
 \end{align*}
where $\dot{e_{s}}$ and $\dot{e_{d}}$ are the $x$ and the $y$ errors, respectively, between the virtual particle and the position of the UAV in the inertial frame $(I)$ that is expressed in the velocity particle frame $(F)$. This rotation angle is the defined as
\begin{equation*}
     \psi_{f} = \arctan\frac{y'_{s}}{x'_{s}}
\end{equation*}
It is also introduced the path curvature of the desired trajectory as $C_c$. As $v=\Tilde{v}+v_{d}$, the previous system now is
\begin{align}
    \begin{array}{*{20}l}
\label{eq:error_changed_model_2}
    \dot{e}_{s} &=& (\Tilde{v}+v_{d})\cos(\Tilde{\psi}) - \dot{s} + Cc(s)e_{d}\dot{s}\\
    \dot{e}_{d} &=& (\Tilde{v}+v_{d})\sin(\Tilde{\psi}) - Cc(s)e_{s}\dot{s}\\
    \dot{\Tilde{\psi}} &=& \omega - Cc(s)\dot{s}
\end{array}
\end{align}
Gathering \eqref{eq:changed_longitudinal_model_2} and \eqref{eq:error_changed_model_2}, the complete error model is given by
\begin{align}
&
    \begin{array}{*{20}l}
    \label{eq:final_model}
        \dot{\Tilde{v}} &=& T\cos{(\Tilde{\alpha}+\alpha_{d})} - D - \sin(\Tilde{\gamma} + \gamma_{d}) \\
		\dot{\Tilde{\gamma}} &=& \frac{T\sin{(\Tilde{\alpha}+\alpha_{d})} + L - \cos(\Tilde{\gamma} + \gamma_{d})}{\Tilde{v} + v_{d}} \\
		\dot{\Tilde{\theta}}_{1} &=& \Tilde{\theta_{2}}\\
		\dot{\Tilde{\theta}}_{2} &=&\Tilde{\tau}\\
        \dot{e}_{s} &=& v\cos(\Tilde{\psi}) - \dot{s} + Cc(s)e_{d}\dot{s}\\
        \dot{e}_{d} &=& v\sin(\Tilde{\psi}) - Cc(s)e_{s}\dot{s}\\
        \dot{\Tilde{\psi}} &=& \omega - Cc(s)\dot{s}.
    \end{array}
\end{align}
\section{Main Result} \label{sec:result}
In this section the main result is resented. For that, let consider the following assumptions.
\begin{assumption}
The airspeed velocity is always positive, i.e. $v \in (0,c]$ with $c \in \mathbb{R}^+$.
\end{assumption}
\begin{assumption}
$v_{d}$ and $\gamma_{d}$ are considerd constants.
\end{assumption}
\begin{assumption}
The initial airspeed is positive, i.e., the fixed-wing UAV has taken off and is flying in the air.
\end{assumption}

The main results is presented in the next
\begin{theorem}
Let system \eqref{eq:final_model} under controls \eqref{eq:T_control}, \eqref{eq:control_tau}, \eqref{eq:sdot} and \eqref{eq:control_omega} then the closed loop system is globally asymptotically stable.
\end{theorem}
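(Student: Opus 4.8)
The plan is to prove global asymptotic stability by a Lyapunov-based backstepping argument that exploits the cascade structure of \eqref{eq:final_model}: the longitudinal block $(\tilde{v},\tilde{\gamma},\tilde{\theta}_1,\tilde{\theta}_2)$ and the lateral path-following block $(e_s,e_d,\tilde{\psi})$ are coupled only through the airspeed $v=\tilde{v}+v_d$. By Assumption~1 this airspeed stays in $(0,c]$, so the factor $1/(\tilde{v}+v_d)$ appearing in $\dot{\tilde{\gamma}}$ is bounded and never singular, which is exactly what makes the $\gamma$-loop well posed. First I would propose a composite, radially unbounded Lyapunov candidate $V=V_{\mathrm{lon}}+V_{\mathrm{lat}}$ and show that the four controls \eqref{eq:T_control}, \eqref{eq:control_tau}, \eqref{eq:sdot} and \eqref{eq:control_omega} render $\dot{V}$ negative definite in the full seven-dimensional error state.

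For the longitudinal block I would use backstepping. The thrust $T$ enters $\dot{\tilde{v}}$ directly, so \eqref{eq:T_control} is chosen to cancel the drag and gravity terms and to inject the feedback $-k_v\tilde{v}$, yielding $\dot{\tilde{v}}=-k_v\tilde{v}$. The flight-path angle is not directly actuated; instead, through the kinematic relation $\alpha=\theta-\gamma$, the pitch angle supplies the angle of attack that acts as a virtual control for $\tilde{\gamma}$ via the lift $L$ and the $T\sin\alpha$ term. I would therefore define a desired pitch that forces $\tilde{\gamma}\to 0$, and then backstep through the chain $\dot{\tilde{\theta}}_1=\tilde{\theta}_2$, $\dot{\tilde{\theta}}_2=\tilde{\tau}$, selecting $\tau$ in \eqref{eq:control_tau} so that $(\tilde{\theta}_1,\tilde{\theta}_2)$ track this virtual reference and the indefinite cross terms in $\dot{V}_{\mathrm{lon}}$ are dominated by the diagonal quadratic terms.

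For the lateral block I would follow the virtual-particle path-following design of \cite{FW_pf4}, taking $V_{\mathrm{lat}}=\tfrac{1}{2}(e_s^2+e_d^2)+\tfrac{1}{2}(\tilde{\psi}-\delta)^2$ with a bounded, smooth approach angle $\delta(e_d)$ satisfying $\delta(0)=0$ and $e_d\,v\sin\delta(e_d)\le 0$. A direct computation shows that the curvature contributions $C_c(s)e_d\dot{s}$ and $-C_c(s)e_s\dot{s}$ cancel in $\dot{V}_{\mathrm{lat}}$ because of the skew-symmetric $(e_s,e_d)$ coupling. I would then pick the virtual-particle speed $\dot{s}$ in \eqref{eq:sdot} to stabilize $e_s$ and the yaw rate $\omega$ in \eqref{eq:control_omega} to steer $\tilde{\psi}\to\delta$, leaving an estimate of the form $\dot{V}_{\mathrm{lat}}\le -k_s e_s^2 + e_d\,v\sin\delta - k_\psi(\tilde{\psi}-\delta)^2\le 0$.

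Finally I would add the two blocks. Since the coupling is one-directional, i.e. the lateral kinematics depend on $v$ but the airspeed loop is autonomous, $\dot{V}=\dot{V}_{\mathrm{lon}}+\dot{V}_{\mathrm{lat}}$ is made negative definite once the gains dominate the remaining cross terms, and radial unboundedness of $V$ gives G.A.S.; on any set where $\dot{V}$ is only negative semidefinite I would close the argument with LaSalle's invariance principle. I expect the main obstacle to be precisely the global domination of these coupling terms: showing that the $T\sin\alpha$ contribution to $\dot{\tilde{\gamma}}$, the state-dependent gain $1/(\tilde{v}+v_d)$, and the saturating driving term $v\sin\delta(e_d)$ can be dominated for all errors, not merely near the origin, so that the claimed global rather than local conclusion genuinely holds.
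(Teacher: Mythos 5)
Your overall architecture --- a composite, radially unbounded Lyapunov function split into a longitudinal block and a lateral path-following block, closed with LaSalle's invariance principle --- is exactly the paper's, and your lateral candidate $\tfrac12 e_s^2+\tfrac12 e_d^2+\tfrac12(\tilde{\psi}-\delta(e_d))^2$ coincides with the paper's $\tilde{V}_2$. The genuine divergence is in the longitudinal block: you propose backstepping, treating the pitch (hence the angle of attack) as a virtual control for $\tilde{\gamma}$ and propagating the tracking error through $\dot{\tilde{\theta}}_1=\tilde{\theta}_2$, $\dot{\tilde{\theta}}_2=\tilde{\tau}$, whereas the paper performs a static input allocation: it names $u_1=T\cos\alpha$ and $u_2=T\sin\alpha$ as two independent inputs, solves the $\tilde{v}$ and $\tilde{\gamma}$ Lyapunov inequalities separately, recovers $T=\sqrt{u_1^2+u_2^2}$ and $\alpha_d=\arcsin(u_2/T)$, and sends $\theta_d=\alpha_d+\gamma_d$ to a fully decoupled linear pitch loop analyzed via the Lyapunov equation $P\tilde{A}+\tilde{A}^{T}P=-Q$. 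Your route is the more rigorous one --- the paper's allocation implicitly evaluates $\dot{\tilde{V}}_{\tilde{v}}$ and $\dot{\tilde{V}}_{\tilde{\gamma}}$ as if $\alpha$ were already at $\alpha_d$ and never accounts for the transient $\tilde{\alpha}\neq 0$, which is precisely the cross term your backstepping would surface and dominate --- but note that the theorem is stated for the specific controls \eqref{eq:control_tau}, \eqref{eq:sdot}, \eqref{eq:control_omega}, which contain no virtual-reference derivative and use $\sign(\cdot)$ terms, so the decay you actually get is $-k_s|e_s|-k_{\omega}|\tilde{\psi}-\delta|$ rather than your quadratic rates, and the $\tilde{\alpha}$ cross term must be handled by a separate ISS/small-gain argument or conceded as the paper's idealization. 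Conversely, your disposal of the lateral coupling term is cleaner and strictly stronger than the paper's: you use $v>0$ (Assumption~1) together with $e_d\sin\delta(e_d)\le 0$ to conclude $e_d\,v\sin\delta(e_d)\le 0$ outright, whereas the paper splits $v=\tilde{v}+v_d$, bounds $|\tilde{v}|$ by $\sqrt{2l}$ on the level set $\Omega_l$, and ends up requiring $v_d>\sqrt{2l}$ --- a restriction that quietly downgrades its conclusion from global to ``asymptotically stable on an arbitrarily large set,'' so on this point your version repairs the weakest step of the paper's argument.
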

\begin{proof}
The goal is to get $v \rightarrow v_d$ and $\alpha \rightarrow \alpha_d$ with $v_d, \alpha_d \in \mathbb{R}^+$. It is important to highlight that the algebraic equation $\theta = \gamma + \alpha$ must be always fulfilled. This is a condition inherent of the AoA, pitch and flight-path angle.
Let consider the following $C^1$ function
\begin{equation}
\label{eq:sum_v1}
   {\Tilde{V}}_{1}= {\Tilde{V}}_{\Tilde{v}}+{\Tilde{V}}_{\Tilde{\gamma}}+{\Tilde{V}}_{\Tilde{\theta}}
\end{equation}
where
    \begin{equation}\label{eq:V1s}
    \Tilde{V}_{\Tilde{v}}=\frac{1}{2}\Tilde{v}^2; \hspace{5mm} \Tilde{V}_{\Tilde{\gamma}}=\frac{1}{2}\Tilde{\gamma}^2; \hspace{5mm} \Tilde{V}_{\Tilde{\theta}}=\Tilde{\theta}^{T}P\Tilde{\theta} 
\end{equation}
with $P = P^{T} > 0$. The time derivative of $\Tilde{V}_{\Tilde{v}}$ is
\begin{align*}
    \begin{array}{*{20}l}
   \dot{\Tilde{V}}_{\Tilde{v}}= \Tilde{v} \dot{\Tilde{v}}=
      \Tilde{v} \left(T\cos{(\Tilde{\alpha}+\alpha_{d})} - D - \sin(\Tilde{\gamma} + \gamma_{d})\right)
\end{array}
\end{align*}
where we have designed the control term $T\cos{(\Tilde{\alpha}+\alpha_{d})}$ as
\begin{align}
    \begin{array}{*{20}l}
     T\cos{(\Tilde{\alpha}+\alpha_{d})} =D + \sin(\Tilde{\gamma} + \gamma_{d})-k_{v}\Tilde{v}
     \label{eq:control_T1}
\end{array}
\end{align}
and then under this control the time derivative of $\Tilde{V}_{\Tilde{v}}$ results in
\begin{align}
    \begin{array}{*{20}l}\label{eq:vv}
     \dot{\Tilde{V}}_{\Tilde{v}}= -k_{v}\Tilde{v}^{2}.
\end{array}
\end{align}
Now obtaining the time derivative of $\Tilde{V}_{\Tilde{\gamma}}$ we obtain
\begin{align*}
    \begin{array}{*{20}l}
     \dot{\Tilde{V}}_{\Tilde{\gamma}}= \Tilde{\gamma} \left(T\sin{(\Tilde{\alpha}+\alpha_{d})} + L - \cos(\Tilde{\gamma} + \gamma_{d})\right)\frac{1}{\Tilde{V}+V_{d}}
\end{array}
\end{align*}
where we choose the control term $T\sin{(\Tilde{\alpha}+\alpha_{d})}$
equal to
\begin{align}
    \begin{array}{*{20}l}
    T\sin{(\Tilde{\alpha}+\alpha_{d})} = -L + \cos(\Tilde{\gamma} + \gamma_{d})-(\Tilde{v}+v_{d})(k_{\gamma}\Tilde{\gamma})
    \label{eq:control_T2}
\end{array}
\end{align}
and therefore
\begin{align}
    \begin{array}{*{20}l}\label{eq:vgamma}
     \dot{\Tilde{V}}_{\Tilde{\gamma}}= -k_{\gamma}\Tilde{\gamma}^{2}.
\end{array}
\end{align}
Considering $u_{1} = T\cos{(\Tilde{\alpha}+\alpha_{d})}$ and $u_{2} = T\sin{(\Tilde{\alpha}+\alpha_{d})}$ as temporal control inputs, we can extract $T$ as
\begin{equation}
    \label{eq:T_control}
    T = \sqrt{u_{1}^{2} + u_{2}^{2}}
\end{equation}
$\alpha_{d}$ is obtained from $u_{2} = T\sin\alpha$ obtaining $\alpha$ as follows
\begin{equation}
    \label{eq:alpha_comand}
    \alpha_{d} = \arcsin \left( \frac{u_{2}}{T} \right),
\end{equation}
this $\alpha_{d}$ is the commanded for $\theta_{d} = \alpha_{d} + \gamma_{d}$. The pitch error subsystem can be represented by
\begin{align*}
    \begin{array}{*{20}l}
     \dot{\Tilde{\theta}}= A\Tilde{\theta}+B\Tilde{\tau} \hspace{5mm} \text{with} \hspace{5mm} A=
        \begin{pmatrix}
    0 & 1\\
    0 & 0
    \end{pmatrix}; \hspace{5mm}
        B=
        \begin{pmatrix}
    0 \\
    1
    \end{pmatrix}.
\end{array}
\end{align*}
Let 
\begin{align}
    \begin{array}{*{20}l}
     \Tilde{\tau}=-k_{\Tilde{\theta_{1}}}\Tilde{\theta_{1}}-k_{\Tilde{\theta}_{2}}\Tilde{\theta_{2}}
     \label{eq:control_tau}
\end{array}
\end{align}
then $\dot{\Tilde{\theta}}= \Tilde{A}\Tilde{\theta}$ where $\Tilde{A} = \begin{pmatrix}
    0 & 1\\
    -k_{1} & -k_{2}
    \end{pmatrix}.$
It follows that
\begin{equation}
\label{eq:vtheta}
     \dot{\Tilde{V}}_{\Tilde{\theta}}=\Tilde{\theta}^{T}(P\Tilde{A}+ \Tilde{A}^{T}P) \Tilde{\theta}=-\Tilde{\theta}^{T}Q\Tilde{\theta}
\end{equation}
with the Lyapunov equation $P\Tilde{A}+ \Tilde{A}^{T}P=-Q$ with $Q=Q^{T}>0$.
Then, it follows that the time derivative of \eqref{eq:sum_v1} is given by 
\begin{align}
    \begin{array}{*{20}l}\label{eq:sumv1}
   \dot{\Tilde{V}}_{1}=-k_{v}\Tilde{v}^{2}-k_{\gamma}\Tilde{\gamma}^{2}-\Tilde{\theta}^{T}Q\Tilde{\theta}\leq 0
\end{array}
\end{align}

From the previous analysis one concludes that all trajectories $E(t) = (\Tilde{v}(t), \Tilde{\gamma}(t), \Tilde{\theta_{1}}(t), \Tilde{\theta_{2}}(t), \Tilde{e_{s}}(t), \Tilde{e_{d}}(t), \Tilde{\psi}(t))$ are bounded, i.e. for each trajectory $E(t)$ there is an $R \in \mathbb{R}$ such that $||E(t)|| \leq R \hspace{1mm} \forall \hspace{1mm} t \geq 0$.

It is known that any bounded trajectory of an autonomous system like \eqref{eq:final_model}, converges to an invariant set \cite{slotine}.
Let such system invariant set be $\Omega_{l}$. By construction we can say that
\begin{align}
    \begin{array}{*{20}l}\label{eq:IS}
    \Omega_{l}=\{ E(t) \in \mathbb{R}^{7}:\Tilde{V}(E)\leq l \}
\end{array}
\end{align}
where $\Tilde{V}(E)$ is a Lyapunov function for \eqref{eq:final_model}. Let find $\Tilde{V}(E)$ by combining $\Tilde{V}_1$, which is not dependent of every states in \eqref{eq:final_model}, with the positive definite function
\begin{align}
    \begin{array}{*{20}l}\label{eq:sum_v2}
   {\Tilde{V}}_{2}= {\Tilde{V}}_{\Tilde{e}_{s}}+{\Tilde{V}}_{\Tilde{e}_{d}}+{\Tilde{V}}_{\Tilde{\psi}}
\end{array}
\end{align}
where
    \begin{equation}\label{eq:V2s}
\Tilde{V}_{\Tilde{e}_{s}}=\frac{1}{2}e_{s}^{2};\hspace{5mm}\Tilde{V}_{\Tilde{e}_{d}}=\frac{1}{2}e_{d}^{2};\hspace{5mm}\Tilde{V}_{\Tilde{\psi}}=\frac{1}{2}(\Tilde{\psi}-\delta(e_{d}))^{2}
\end{equation}
where $\delta(e_{d})$ is a function which is saturated as in \cite{saturation} that satisfies the condition $e_{d}\delta(e_{d}) \leq 0 \hspace{1mm} \forall \hspace{1mm} e_{d}$. Let's compute the time derivative of $\Tilde{V}_{e_{s}}+\Tilde{V}_{e_{d}}$ and then we get
\begin{align*}
    \begin{split}
    \dot{\Tilde{V}}_{e_{s}}+\dot{\Tilde{V}}_{e_{d}}=e_{s}\dot{e}_{s}+e_{d}\dot{e}_{d}=  e_{s}[(& \Tilde{v}  + v_{d})\cos{\Tilde{\psi}}  - \dot{s}+Cc(s)e_{d}\dot{s}]\\
    & + e_{d}[(\Tilde{v}+v_{d})\sin{\Tilde{\psi}}-Cc(s)e_{s}\dot{s}].
    \end{split}
\end{align*}
If we define $\dot{s}$ as
\begin{equation}\label{eq:sdot}
    \dot{s} = k_{s} \sign(e_{s})+(\Tilde{v}+v_{d})\cos{\Tilde{\psi}}
\end{equation}
then
\begin{align}
    \begin{array}{*{20}l}\label{eq:sum_ves_ved}
    \dot{\Tilde{V}}_{e_{s}}+\dot{\Tilde{V}}_{e_{d}}= -k_{s}|e_{s}|+(\Tilde{v}+v_{d})e_{d}\sin{\Tilde{\psi}}.
    \end{array}
\end{align}
Now if we obtain the time derivative of $\Tilde{V}_{\Tilde{\psi}}$
\begin{align*}
    \begin{split}
    \dot{\Tilde{V}}_{\psi}= (\Tilde{\psi}-\delta(e_{d}))(\omega  &- Cc(s)\dot{s}\\
    &-\dot{\delta}[(\Tilde{v}+v_{d})\sin{\Tilde{\psi}} - Cc(s) e_{s}\dot{s}])
    \end{split}
\end{align*}
then consider the following proposed control
\begin{align}
    \begin{split}
    \omega= Cc(s)\dot{s} &+\dot{\delta}[(\Tilde{v}+v_{d})\sin{\Tilde{\psi}} - Cc(s) e_{s}\dot{s}]]\\
    &-(\Tilde{v}+v_{d})e_{d}\left(\frac{\sin{\Tilde{\psi}} -\sin{\delta}}{\Tilde{\psi}-\delta}\right) - k_{\omega} \sign (\Tilde{\psi}-\delta)
    \label{eq:control_omega}
    \end{split}
\end{align}
then
\begin{align}
    \begin{array}{*{20}l}\label{eq:v_psi}
    \dot{\Tilde{V}}_{\psi}= -(\Tilde{v}+v_{d})e_{d}\sin{\Tilde{\psi}}+(\Tilde{v}+v_{d})e_{d}\sin{\delta}-k_{\omega}|\Tilde{\psi}-\delta(ed)|.
    \end{array}
\end{align}
Using \eqref{eq:sum_ves_ved} and \eqref{eq:v_psi} in \eqref{eq:sum_v2}, it follows that
\begin{align}
    \begin{array}{*{20}l}\label{eq:sumv2}
   \dot{\Tilde{V}}_{2}=  -k_{s}|e_{s}|+\Tilde{v}e_{d}\sin{\delta(e_{d})} +v_{d}e_{d}\sin{\delta(e_{d})}\\
   \hspace{7mm}-k_{\omega}|\Tilde{\psi}-\delta(ed)|.
\end{array}
\end{align}
Then, as 
\begin{align*}
    \begin{array}{*{20}l}
   \dot{\Tilde{V}}= \dot{\Tilde{V}}_{1}+\dot{\Tilde{V}_{2}} 
\end{array}
\end{align*}
we substitute \eqref{eq:sumv1} and \eqref{eq:sumv2}
\begin{align}
    \begin{array}{*{20}l}\label{eq:v_sum}
   \dot{\Tilde{V}}=-k_{v}\Tilde{v}^{2}-k_{\gamma}\Tilde{\gamma}^{2}-\Tilde{\theta}^{T}Q\Tilde{\theta} -k_{s}|e_{s}|+\Tilde{v}e_{d}\sin{\delta(e_{d})} \\ 
   \hspace{7mm}+v_{d}e_{d}\sin{\delta(e_{d})}-k_{\omega}|\Tilde{\psi}-\delta(e_{d})|
\end{array}
\end{align}
From \eqref{eq:IS} it follows that
\begin{align}
    \begin{array}{*{20}l}\label{eq:2L}
   |\Tilde{V}|=\sqrt{2l}.
\end{array}
\end{align}
Then if we define $a$ from \eqref{eq:v_sum} as
\begin{align}
    \begin{array}{*{20}l}\label{eq:a}
   a=-k_{v}\Tilde{v}^{2}-k_{\gamma}\Tilde{\gamma}^{2}-\Tilde{\theta}^{T}Q\Tilde{\theta} -k_{s}|e_{s}|-k_{\omega}|\Tilde{\psi}-\delta(e_{d})|
\end{array}
\end{align}
and we substitute \eqref{eq:a} and \eqref{eq:2L} in \eqref{eq:v_sum} we have
\begin{align*}
    \begin{array}{*{20}l}
   \dot{\Tilde{V}}\leq a +\sqrt{2l}|e_{d}\sin{\delta(e_{d})}|+v_{d}e_{d}\sin{\delta(e_{d})}
\end{array}
\end{align*}
and
\begin{align*}
    \begin{split}
   a +\sqrt{2l}|e_{d}\sin{\delta(e_{d})}|&+v_{d}e_{d}\sin{\delta(e_{d})} \\
   & \leq a -(v_{d}-\sqrt{2l})|e_{d}\sin{\delta(e_{d})}|
\end{split}
\end{align*}
since $-|e_{d}\sin{\delta(e_{d})}|=e_{d}\sin{\delta(e_{d})}$, then $\dot{\Tilde{V}} \leq 0$ if $v_{d}>\sqrt{2l}$. Also, if $\psi=\delta(e_{d})$ and the remaining states are zero, $\dot{\Tilde{V}}=0$. Then it is clear that $\Tilde{V}>0 \hspace{1mm} \forall \hspace{1mm} E(t) \neq 0$ and $\dot{\Tilde{V}}(E(t)) \leq 0 \hspace{1mm} \forall \hspace{1mm} E(t)$ in $\Omega_{l}$. Let $S \subseteq \Omega_{l}$ the set of all points where $\dot{\Tilde{V}}=0$ given by
\begin{align*}
    \begin{array}{*{20}l}
   \dot{\Tilde{V}} = 0 = -k_{\omega}|\Tilde{\psi}-\delta(e_{d})| \rightarrow \Tilde{\psi}=\delta(e_{d}).
   \end{array}
\end{align*}
Note that the rest of $\dot{\Tilde{V}}$ is equal to zero at zero. Observing the closed loop system \eqref{eq:final_model} with the controls \eqref{eq:T_control}, \eqref{eq:control_tau} and \eqref{eq:control_omega} when $\Tilde{v} = \Tilde{\gamma} = \Tilde{\theta}_{1} = \Tilde{\theta}_{2} = e_{s} = 0$ and $\Tilde{\psi} = \delta(e_{d})$ one obtains
\begin{align}
    \begin{array}{*{20}l}
    \dot{e}_{s} &=& 0 = v_{d}\cos{\Tilde{\psi}} - v_{d}\cos{\Tilde{\psi}} + Cc(s)e_{d}[v_{d}\cos{\Tilde{\psi}}]\\
    &=& Cc(s)v_{d}e_{d}\cos{\delta(e_{d})}\\
   \dot{e}_{d} &=& 0 = v_{d}\sin{\Tilde{\psi}}\\
   \dot{\Tilde{\psi}} &=& \dot{\delta}(v_{d}\sin{\Tilde{\psi}})
   \label{eq:es_ed_psi_errors}
\end{array}
\end{align}
then $e_{d} = \psi = 0$ is the only solution of \eqref{eq:es_ed_psi_errors} and therefore the largest invariant set $M$ in $S$ is the origin $E(t) = 0$ and by the LaSalle invariance principle $E(t) \rightarrow 0$ as $t \rightarrow \infty$; as long as the initial state $E_{0}(t)$ be in $\Omega_{e}$. And the closed loop system is asymptotically stable in the arbitrary large set: $\Omega_{e}$ provided that $\sqrt{2l} \leq v_{d}$. Also $\tilde{V} \rightarrow \infty$ as $|E(t)| \rightarrow \infty$, i.e. $\Tilde{V}$ is radially unbounded, then the closed loop system is G.A.S.
\end{proof}
\section{Simulation Experiments} \label{sec:sims}
To verify the stability in a computational environment, a numerical simulation of the systems \eqref{eq:model_longitudinal} and \eqref{eq:model_lateral} under controls \eqref{eq:T_control}, \eqref{eq:control_tau}, \eqref{eq:sdot} and \eqref{eq:control_omega} is performed. The parameters used in the simulation are: $\psi_{a} = \frac{\pi}{2.1}$, $m = 3$, $ g= 9.81$, $I_{y} = 1$, $k_{\tilde{\theta}_{1}} = 10$, $k_{\tilde{\theta}_{2}} = 5$, $k_{v} = 50$, $k_{s} = 2$,  $k_{\gamma} = 2$, $k_{\omega} = 20.25$, $\gamma_{d} = 1^{\circ}$, $V_{d} = \sqrt{\frac{m \cdot g}{.96*c}}$, where $c = \frac{pS}{2}$ is a constant that describes the air density an the area of the wind. The initial conditions for the simulation are: $s_{0} = 0$, $x_{0} = 10$, $y_0 = -5$, $\psi_{0} = 0^{\circ}$, $\gamma_0 = 8^{\circ}$, $\theta_{0} = 8^{\circ}$, $V_{0} = 1$. The induced disturbance to the $\psi$ state is a sinusoidal signal with $20$ units of amplitude and a frequency of $0.5 \frac{rad}{s}$. The airspeed is depicted at Fig. \ref{fig:velocity_state}. The control $T$ applied to the model is shown in Fig. \ref{fig:trhust_input}.
\begin{figure}[ht]
    \centering
    \includegraphics[width=\columnwidth]{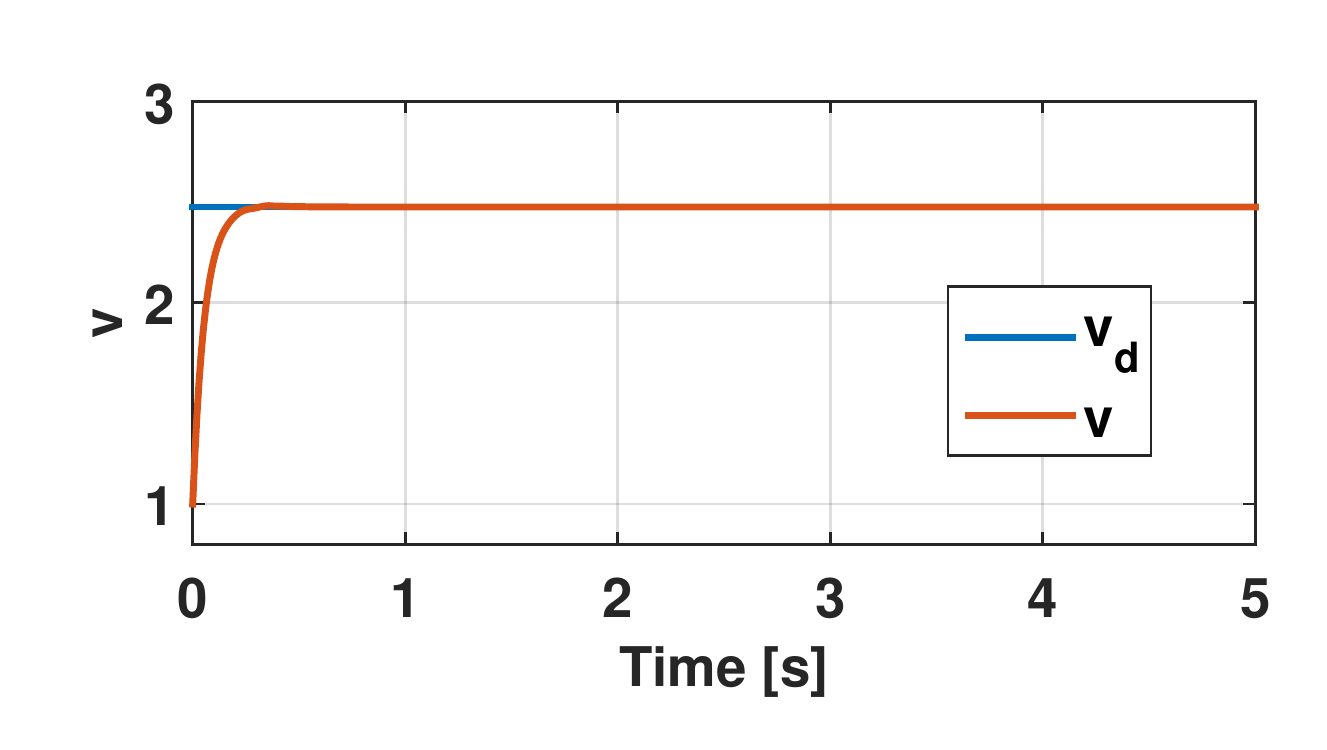}
    \caption{Airspeed $v$ through the UAV control for the optimal AoA considering aerodynamics of \cite{8619303}.}
    \label{fig:velocity_state}
\end{figure}
\begin{figure}[ht]
    \centering
    \includegraphics[width=\columnwidth]{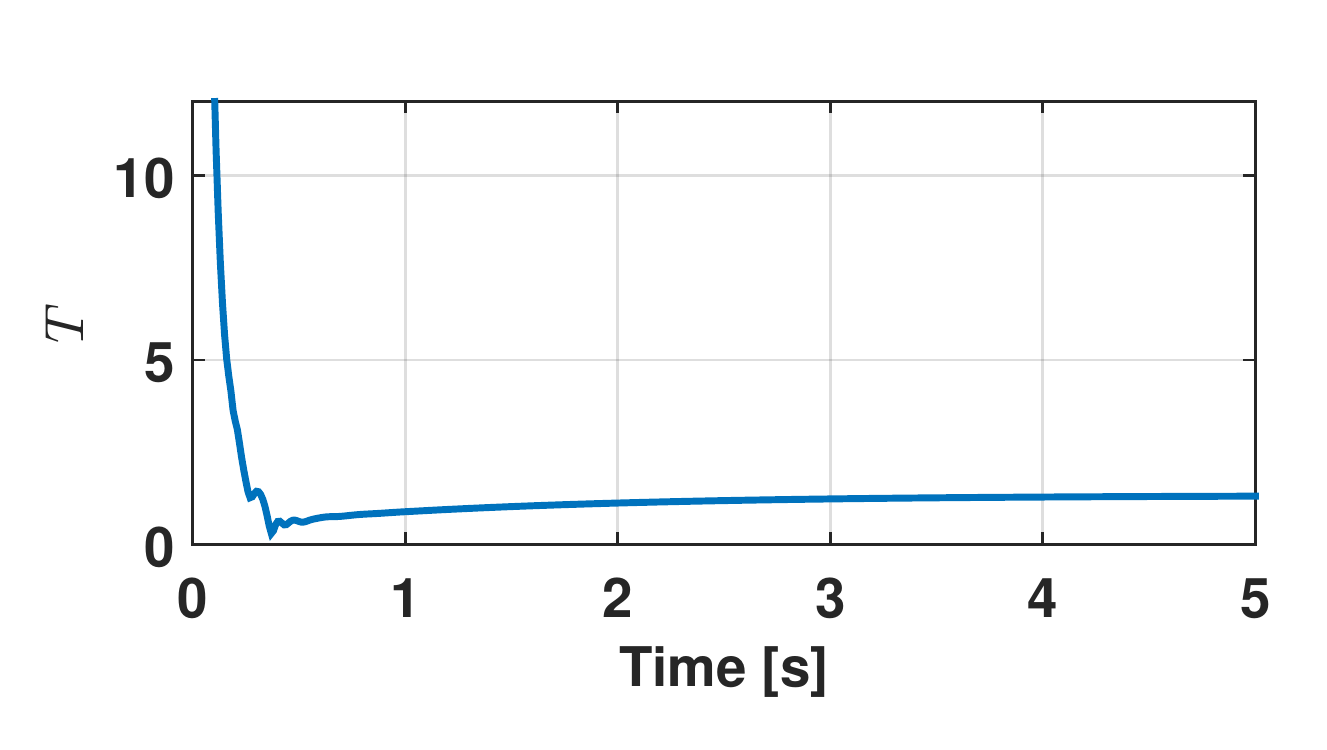}
    \caption{Thrust control $T$ applied to the fixed-wing UAV.}
    \label{fig:trhust_input}
\end{figure}

Another important fact that determines an optimal flight of the UAV is the AoA. According to the lift coefficient of the wing profile used for this simulation, which is a symmetric profile, the best AoA in relation with the lift-drag coefficients is around $6^{\circ}$. Then, it is expected that this angle is reached during stabilization process. This AoA and the other aerodynamic terms have been taken from the UAV presented in our previous work \cite{8619303}. In Fig. \ref{fig:angles} it is shown the angles evolution including the AoA convergence to the desired value equal to $6^{\circ}$.
\begin{figure}[ht]
    \centering
    \includegraphics[width=\columnwidth]{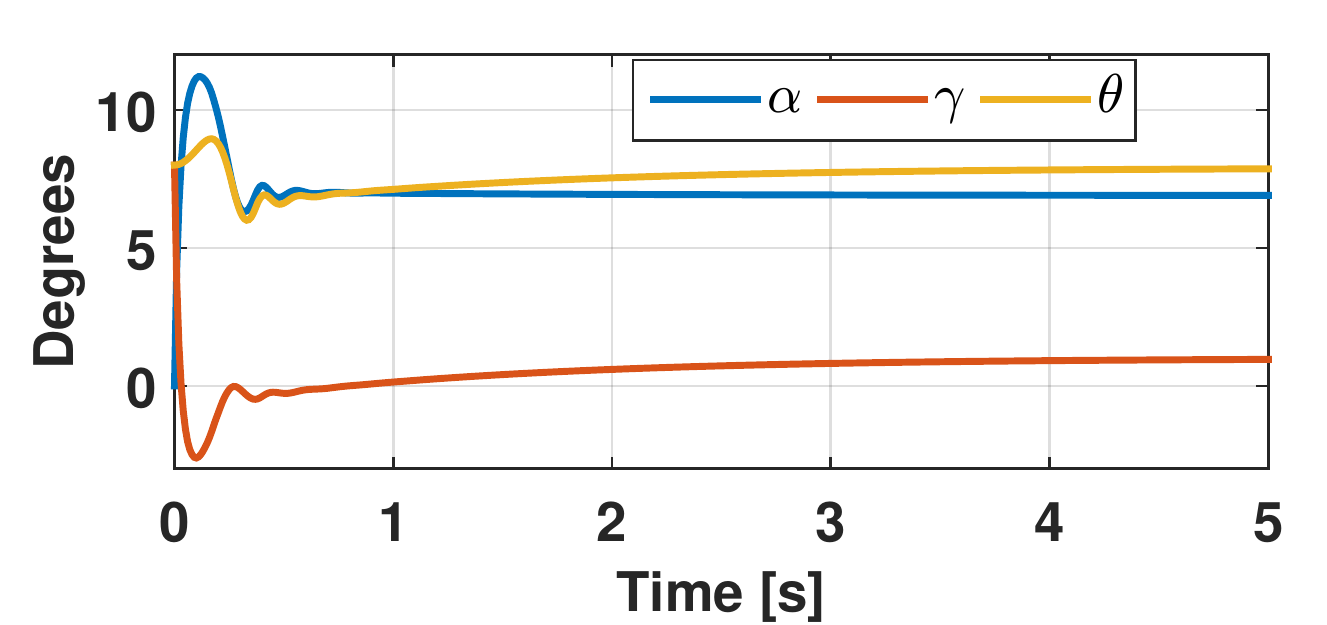}
    \caption{UAV angles $\alpha$, $\gamma$ and $\theta$ keeping the property of $\theta = \alpha + \gamma$.}
    \label{fig:angles}
\end{figure}
The control input $\tau$ is shown at Fig. \ref{fig:tau}. With this control it is possible to reach the $\theta_{d}$ according to $\gamma_{d}$ and $\alpha_{d}$.
\begin{figure}[ht]
    \centering
    \includegraphics[width=\columnwidth]{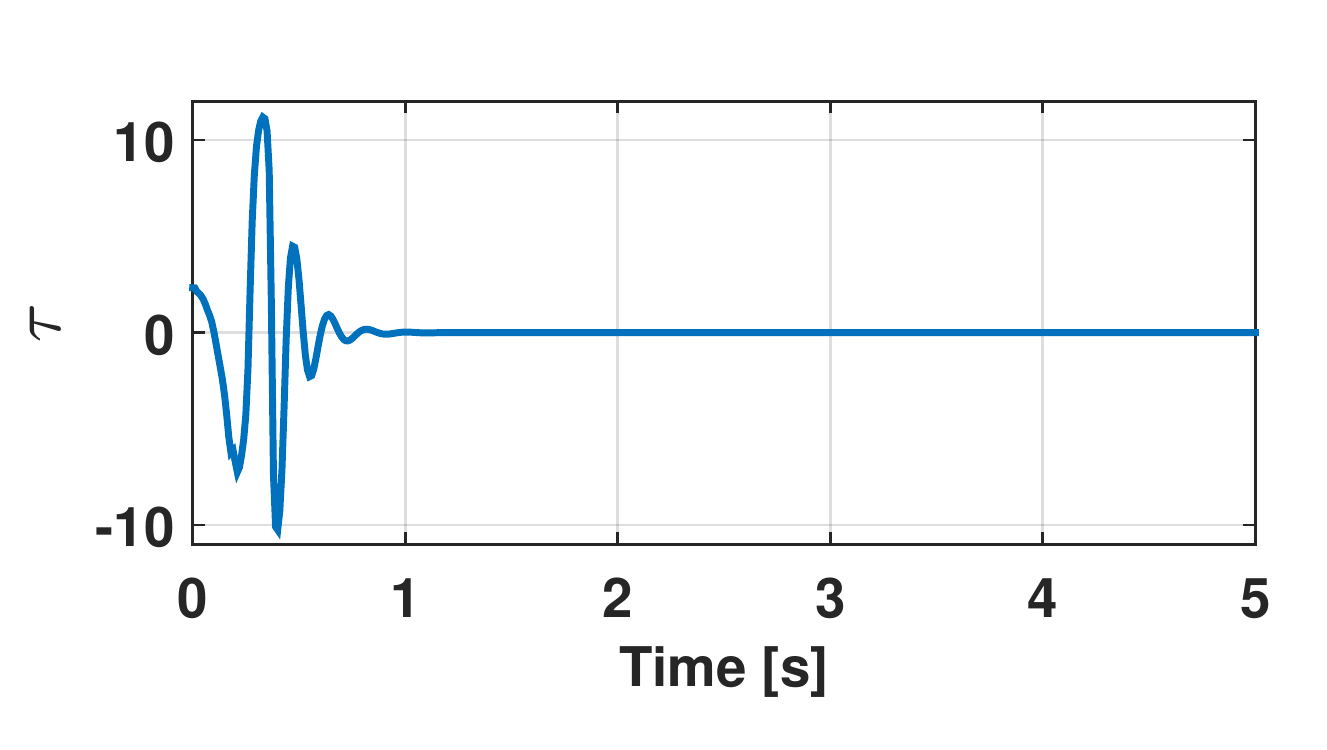}
    \caption{Control $\tau$ for the UAV pitch angle.}
    \label{fig:tau}
\end{figure}

In Fig.\ref{fig:omega}, the control $\omega$ is plotted. With this control, $\psi$ is guided to the desired heading for the path following.
\begin{figure}
    \centering
    \includegraphics[width=\columnwidth]{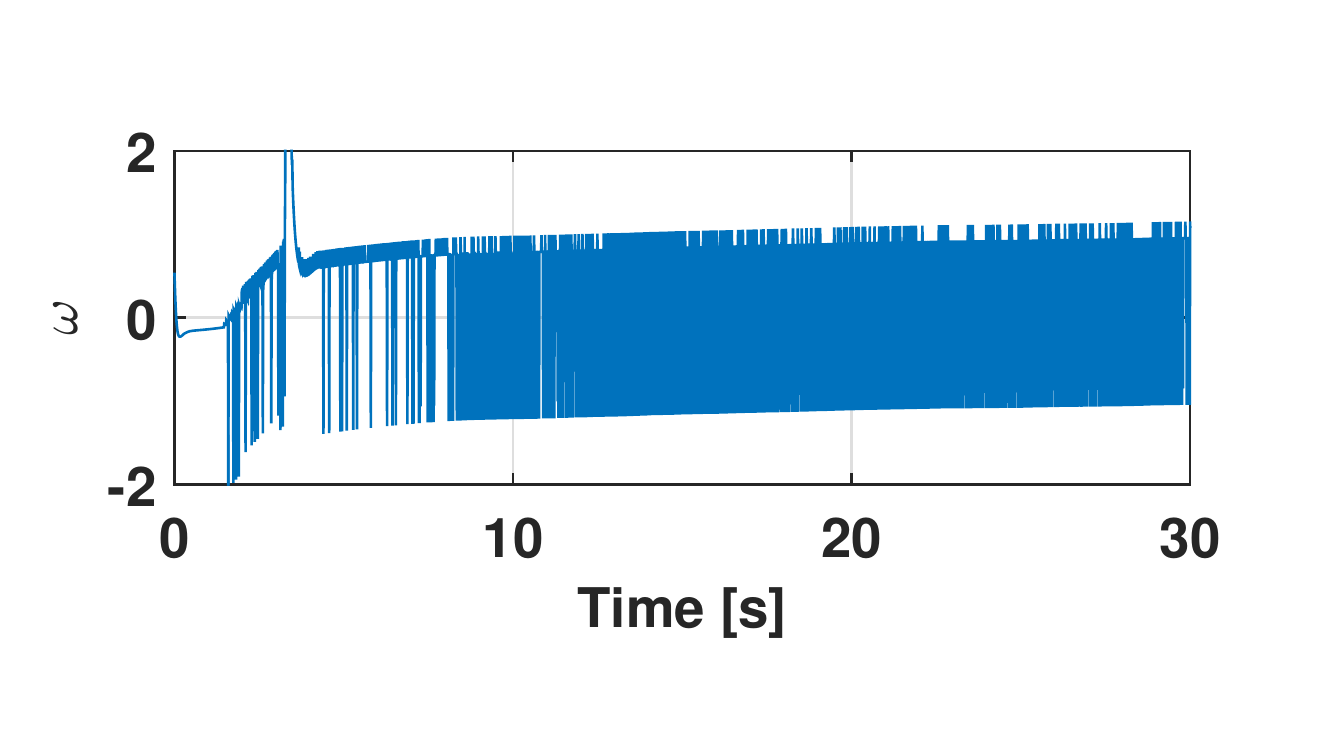}
    \caption{$\omega$ control applied to the $\psi$ state in the system.}
    \label{fig:omega}
\end{figure}
Finally, Figs. \ref{fig:path} and \ref{fig:path_np} depict the UAV's position w.r.t. the desired path, which is in this case a circle of radius $20$ with and without disturbance in $\psi$, respectively.
\begin{figure}
    \centering
    \includegraphics[width=\columnwidth]{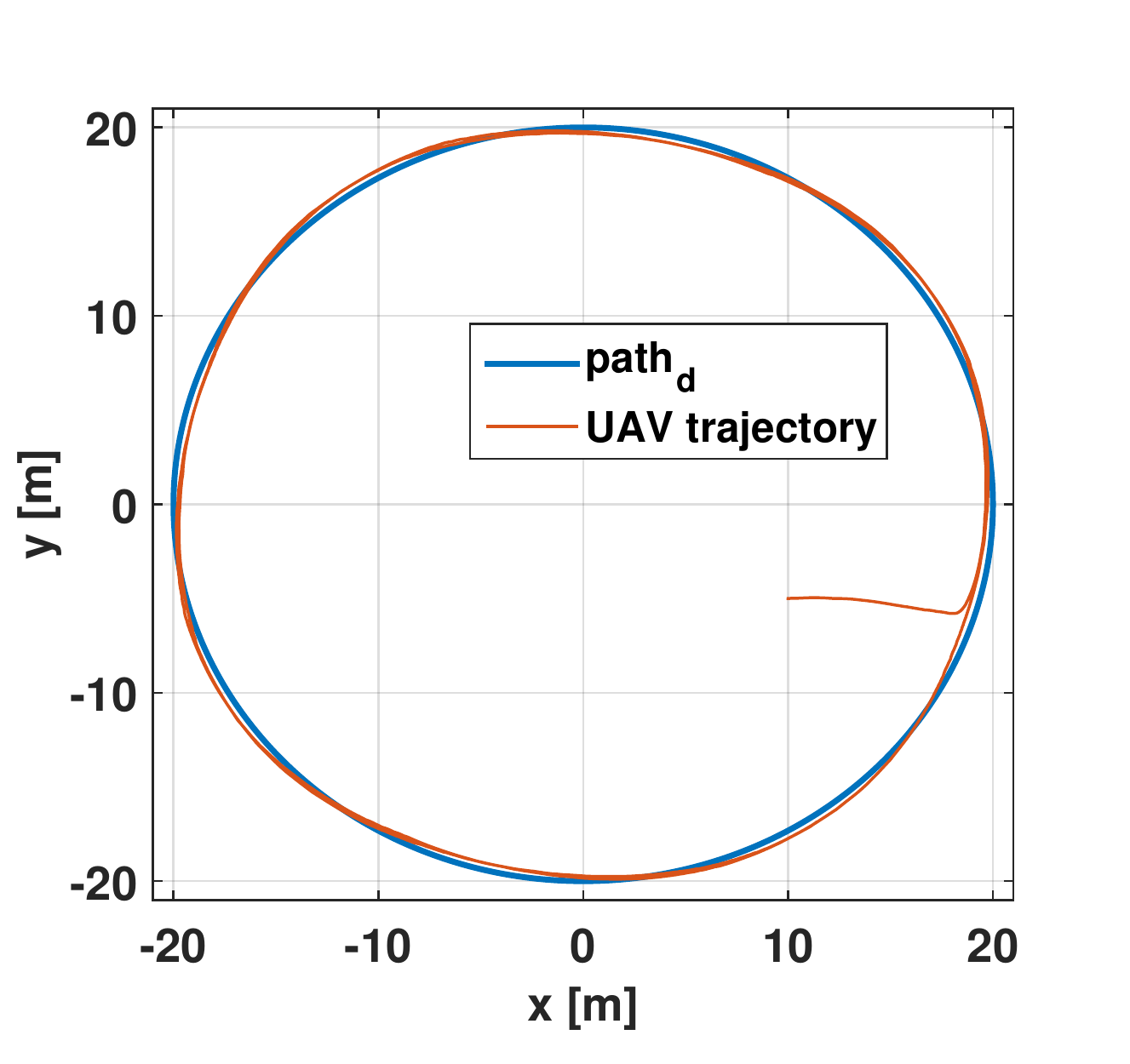}
    \caption{UAV $(x,y)$ position during the simulation; and the desired path with disturbance in $\psi$.}
    \label{fig:path}
\end{figure}
\begin{figure}
    \centering
    \includegraphics[width=\columnwidth]{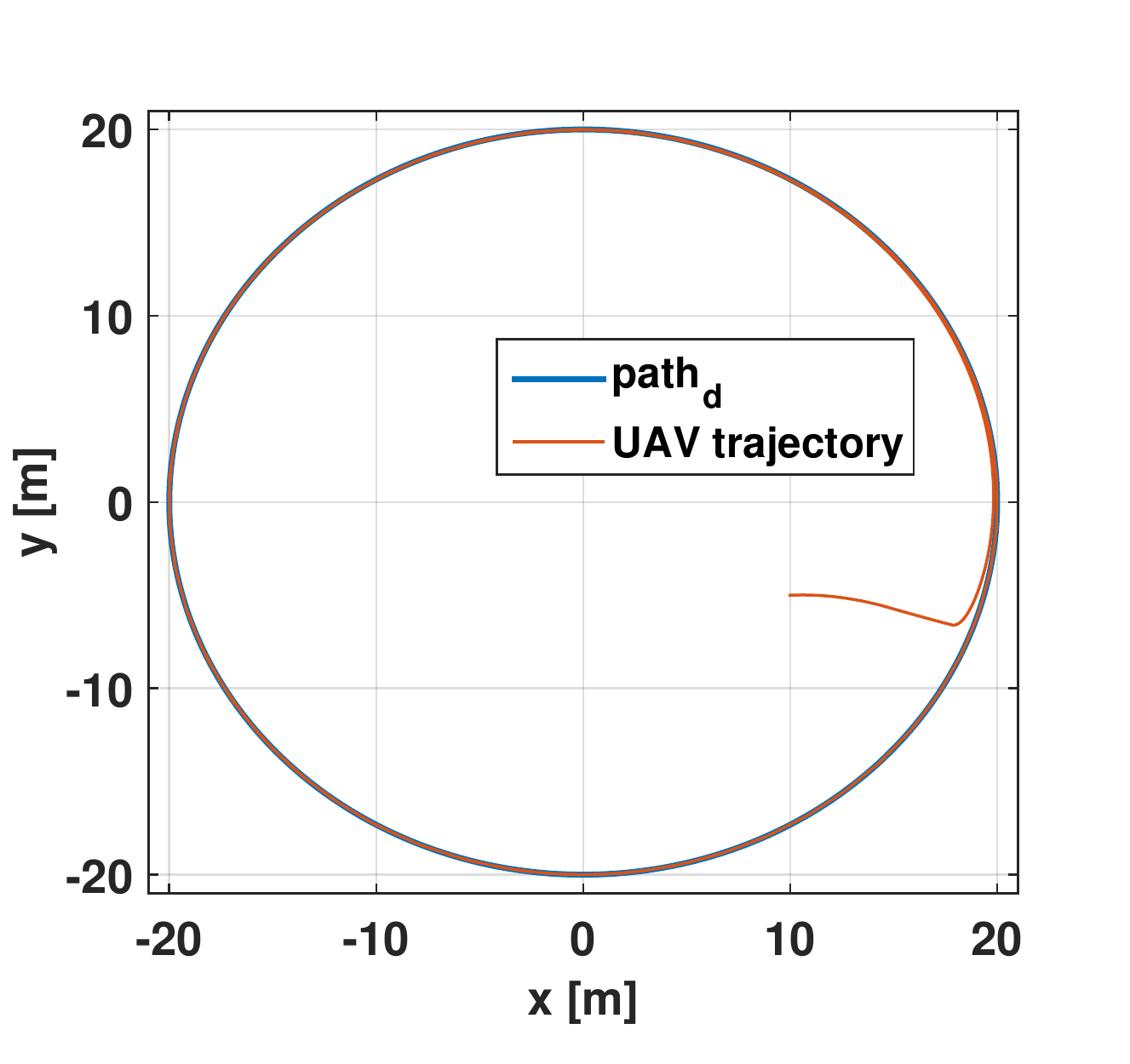}
    \caption{UAV $(x,y)$ position during the simulation; and the desired path without disturbance.}
    \label{fig:path_np}
\end{figure}
\section{Conclusions} \label{sec:conclusions}
We have presented a full controller for a fixed-wing UAV which presents robust capabilities for disturbances presented in the UAV dynamics. A stability proof is presented which demonstrated that the closed-loop system is GAS.

As future work we intend to implement this controller used the approach known as hardware in the loop (HIL) and finally be implemented in the real fixed-wing UAV in a Pixhawk autopilot.
\bibliographystyle{IEEEtran}
\bibliography{biblio}
\end{document}